\newtheorem{theorem}{Theorem}[section]
\def\squarebox#1{\hbox to #1{\hfill\vbox to #1{\vfill}}}
\newcommand{\qed}{\hspace*{\fill}\vbox{\hrule\hbox{\vrule\squarebox{.667em}\vrule}\hrule}\smallskip}
\newenvironment{proof}{\noindent{\bf Proof:~~}}{\(\qed\)}
\begin{document}

\title{I'd Rather Stay Stupid:\\ The Advantage of Having Low Utility}
\author{Lior Seeman\\
     Department of Computer Science\\
       Cornell University\\
    	lseeman@cs.cornell.edu
}
\date{}
\maketitle

\begin{abstract}
Motivated by cost of computation in game theory, we explore how changing the utilities of players (changing their complexity costs) affects the outcome of a game.
We show that even if we improve a player's utility in every action profile, his payoff in equilibrium might be lower than in the equilibrium before the change. 
We provide some conditions on games that are  sufficient to ensure this does not occur. 
We then show how this counter-intuitive phenomenon can explain real life phenomena such as free riding, and why this might cause people to give signals indicating that they are not as good as they really are.   
\end{abstract}

\section{Introduction}
We are all familiar with situations where we feel that if only that other choice was a little cheaper, we could have done so much better, or if our computer was just a bit faster, we would have been in a much better situation when facing our competitors, or if the government would have just subsidized our research, we could have been in a much better position to compete.
This paper will show that sometimes we were just wrong.

Our motivation for looking at this question comes from trying to understand if having bounded rationality is always bad for a player.
More specifically, we consider whether decreasing the cost of computation would make an agent better off, and most of the examples we give are motivated by this question. However, our ideas can easily be generalized to any change in the utility of the players.

There have been a number of attempts to take the cost of computation into account in game theory.
The first approach is to bound players' computational power by restricting them to use only finite automata.
This approach was first introduced by Neyman \cite{Neyman1985}, who showed that complexity cost can affect the outcome of a game, by showing it can explain cooperation in repeated prisoners' dilemma.
In later papers, Ben-Porath \cite{porath1993repeated} and Gilboa and Samet  \cite{gilboa1989bounded} all showed, as our intuition expects, that a bounded player has disadvantages against a much stronger player who can use a much larger automaton.
But Gilboa and Samet also showed a somewhat opposite phenomenon, which they called ``the tyranny of the weak". They showed that the bounded player might actually gain from being bounded, relative to a situation where he was unbounded, because being bounded serves as a credible ``threat". 

In this paper, we explore this counter-intuitive scenario in a slightly different framework, where, instead of being bounded, the players pay for the complexity of the strategy they use. This approach was first introduced by Rubinstein \cite{rubinstein1986finite}, where the players pay for their automaton size. Ben-Sason, Kalai and Kalai \cite{ben2007approach} studied a class of games, where, in addition to the game, each player has a cost associated with each strategy he is playing, regardless of the other players' strategies. 
These ideas were generalized by Halpern and Pass \cite{HP10}. In their framework, a player's strategy involves choosing a Turing machine, and the complexity cost of the strategy choice is a function of the machines chosen by all the players and of the input. Our paper uses a version of this framework to explore the added value for a player from having better complexity.

Our result are similar in spirits to results obtained for ``value of information", the added value in expected utility that an agent gets from having information revealed to him. Blackwell \cite{blackwell1951comparison,blackwell1953equivalent} showed that in a single-agent decision problem, the value of information is non-negative. In a multi-agent environment, the situation is more complicated, because we need to consider how the information that an agent possesses affects other agents' actions.  Hirshleifer \cite{hirshleifer1971private} and Kaimen, Tauman, and Zamir \cite{kamien1990value} showed that, in a multi-agent game, more information to a single player can result in an equilibrium in which his payoff is reduced. Neyman \cite{neyman1991positive} showed that if the other players do not know about a player's new information, that player's payoff can not be reduced in equilibrium. 

Similarly to the idea of value of information we compare games before and after a change in the utility functions of the players. As Neyman \cite{neyman1991positive} pointed out, changing a game in such a way creates a totally new game, so by comparing the utility in equilibrium before and after the change, we actually compare two different games (for example, the information of the players also changes, since the game description is different and this is part of their information), so perhaps it is not surprising that the results are not always what we might expect. Nevertheless, we also choose this approach since we feel that, although these are two different games, their most significant difference is the utility change and all other changes are caused by it. 

In this paper, we show that decreasing a player's computation cost (more generally, locally improving a player's utility) can lead him to a worse global outcome in equilibrium (When more than one equilibrium exists, we compare the equilibrium with the worst outcome for the player). The player actually lose some advantages he had from being weak.
We also discuss some conditions under which this can not happen. These conditions correspond to a game with strict competition.

In the last part of the paper we show how this unintuitive phenomenon, that at first might seem like a problematic aspect of the Nash equilibrium solution concept, can actually explain real life phenomena. We first show that this can explain free riding, where a group lets a weak member of it, that does not contribute for the group's effort, receive credit for the group's success. Moreover the weak player has no incentive to improve and become stronger. We then take an extra step and show that this advantage of weak players might cause people to give signals indicating, that they are stupider than they really are. 
\newline
\newline
{\bf Paper Outline}  The rest of the paper is organized as follows. Section 2 gives a simple examples of the phenomenon, and how computational cost plays a role in it. Section 3 looks at the special case of constant-sum games. Sections 4 and 5 discuss some real-life behaviors and how they can be explained by this framework. We conclude in Section 6 with some discussion.

\section {I dare you to factor}
Consider the following game $G$: \newline
\begin{center}
\begin{tabular}{|c|c|c|}\hline  & factor & don't factor \\\hline factor & 1,1  & 1,3 \\\hline don't factor & 3,1 & -10,-10 \\\hline \end{tabular}
\end{center}

This is an instance of the ``chicken" game, where both players are presented with one large number to factor. A player who factors the number gets a reward of $1$. If one player factors the number and the other does not, then the player who does not factor gets $3$. However if neither player factors the number, they are both punished and need to pay $10$. This game has two pure-strategy equilibria in which one player factors and the other doesn't factor, and one mixed-strategy equilibrium where they both factor with probability $\frac{11}{13}$ and get an expected reward of $1$.

Now consider the following : The year is 2040, Player 1 has a powerful state-of-the-art classical computer, while player 2 has the newest ``Ox" quantum computer, capable of factoring very large numbers efficiently. Both players have a complexity cost associated with every action they take that is represented as a complexity function. Player 1 has a complexity function $c_1$, where not factoring cost nothing, and factoring is not possible, so its complexity is $\infty$. Player 2's complexity function is 0 for both actions. A player's utility is simply the reward of the player minus his complexity cost. This game has only one equilibrium: player 1 does not factor, while player 2 factors. The utility vector in this equilibrium is $(3,1)$.

Now what happens if we change player $1$'s complexity function by giving him an ``Ox" computer? His utilities have obviously improved everywhere, but does this help him or hurt him? The new game we get is identical to the original game without complexity costs, and so has two more equilibria . In both new equilibria  player $1$'s utility is only $1$ instead of $3$. The third equilibrium is identical to the equilibrium with the old costs. This change made things worse for player $1$, since in the worst case his utility with the new function is lower than the utility with the old function. What actually happens here is that when player $1$ has only a classical computer, he has a credible ``threat": he is not going to factor no matter what player $2$ does (This is similar to removing the stirring wheel from the car).  Thus player $2$ must factor. When they both have the ``Ox" computer, that threat is gone and player $2$ is not going to agree to always factor. If offered a free ``Ox'' computer, player $1$ will actually refuse to get it.

The next example shows that the player can do strictly worse in all equilibria by this kind of change to the utilities, not only in the worst case equilibrium. Consider the following game:
\begin{center}
\begin{tabular}{|c|c|c|}\hline  & $a_2$ & $b_2$  \\\hline $a_1$ & 2,1  & -2,2 \\\hline $b_1$ & 3,1 & -1,-1 \\\hline \end{tabular}
\end{center}

In this game, player $1$ has a dominant strategy $b_1$, which leads to only one equilibrium, ($b_1$,$a_2$), with utilities $(3,1)$. Now if player $1$ gets a subsidy of 2 when playing $a_1$,  we get a different equilibrium, ($a_1$,$b_2$), with utilities of $(0,2)$. Note that even the social welfare is worse in this scenario. This change happens because player's $1$ dominant strategy changed from $b_1$ to $a_1$. When player $1$ plays $b_1$, player $2$ prefers to play $a_2$. The change in utility for player $1$ changes the dynamics between the two players, which makes player $2$ also change his actions, and leads to a new equilibrium. Getting the subsidy, which improved player $1$ utilities locally, leads to an equilibrium where his utility is lower. What actually happens is that when player $1$ gets no subsidy for playing  $a_1$, player $2$ knows he can't make player $1$ play anything that is not $b_1$, so he has no choice but to play $a_2$. When he does get the subsidy for $a_1$, player $2$ knows player $1$ will play $a_1$ always so he can play $b_2$. Player $1$ can't threaten player $2$ with playing $b_1$ any more. 

These two examples show that being bounded can help a player, even if it is not just to using a finite automaton. 
They show that changing a player's utilities so that he is better off in any strategy profile (improving his complexity function for every action) might result in an equilibrium in which the player's utility is lower than with his old utilities. This happens because having a bad utility for some profiles gives a player a threat against the other players. This threat is lost when his utility gets better, and some actions that were once unacceptable by him might now be a best response for him to the other players' choice of actions. This is used by the other players to change the equilibrium of the game and create a final result where the player might have lower utility. So, although the player is better off for any strategy profile chosen, the strategy profile that is an equilibrium in the new game is worse for him.

This section showed that, in general games, increasing a player's utility locally (or reducing his complexity cost) can result in an equilibrium where his utility is lower. In a sense, we showed that in some games players actually prefer to be bounded or weak. The next section will look at the special case of constant-sum games.

\section {Constant-sum games}

Constant-sum games have some unique characteristics. In particular they are totally competitive. No one can gain from cooperation. Our intuition is that in these kind of games, a player can not get hurt by improving his utility function. In this section, we show that this intuition is correct and what kind of changes can we make to such games and still have the same effect.

Constant-sum games have the very nice property that by the minimax theorem we know exactly how the players play at equilibrium. In particular, we know that in equilibrium each player plays his defense strategy (his maxmin strategy) which means he plays the strategy that maximizes his minimum payoff - the strategy that gives him the maximum payoff against any strategy the other player plays. We use this fact to show the following theorem.

\begin{theorem} 
Let G be a 2-player constant-sum game with utility functions $u$. Let G' be a game with the same action space as G but with utility functions $u'$ such that for all $\overrightarrow{a}, u_i(\overrightarrow{a})\leq u_i'(\overrightarrow{a})$, and $u'_{-i}$ changed arbitrary. In equilibrium, player $i$'s utility in G' can't be lower than in G.
\end{theorem}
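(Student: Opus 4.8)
The plan is to route everything through the \emph{maxmin value} (security level) of player $i$, and to combine two elementary facts: (a) increasing $u_i$ pointwise can only increase player $i$'s maxmin value, and (b) in a constant-sum game player $i$'s equilibrium payoff \emph{is} his maxmin value. Throughout, the max and min defining maxmin values are taken over mixed strategies, so that the minimax theorem applies.

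First I would record the standard observation that in \emph{any} finite game, player $i$'s payoff in every Nash equilibrium is at least his own maxmin value. Concretely, let $(\sigma_i,\sigma_{-i})$ be an equilibrium of $G'$ and let $v'_i := \max_{\tau_i}\min_{\tau_{-i}} u'_i(\tau_i,\tau_{-i})$ with witness strategy $\sigma_i^{*}$. Since $\sigma_i$ is a best response to $\sigma_{-i}$ in $G'$, deviating to $\sigma_i^{*}$ cannot help, so $u'_i(\sigma_i,\sigma_{-i}) \ge u'_i(\sigma_i^{*},\sigma_{-i}) \ge \min_{\tau_{-i}} u'_i(\sigma_i^{*},\tau_{-i}) = v'_i$. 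This step uses only that equilibrium strategies are best responses; it does \emph{not} require $G'$ to be constant-sum (indeed it generally is not, since $u'_{-i}$ is arbitrary).

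Next I would compare the two maxmin values. Because $u_i(\overrightarrow{a}) \le u'_i(\overrightarrow{a})$ for every profile $\overrightarrow{a}$, for each fixed mixed strategy $\tau_i$ we have $\min_{\tau_{-i}} u_i(\tau_i,\tau_{-i}) \le \min_{\tau_{-i}} u'_i(\tau_i,\tau_{-i})$, and taking the maximum over $\tau_i$ preserves this, giving $v_i \le v'_i$, where $v_i$ is the maxmin value of player $i$ in $G$. Note that the arbitrary change to $u_{-i}$ is irrelevant here, since these maxmin values depend only on $u_i$ and $u'_i$ respectively. Finally, I would invoke the minimax theorem for the constant-sum game $G$: every Nash equilibrium of $G$ yields player $i$ exactly $v_i$ (in particular all equilibria of $G$ agree on this value, so the ``worst-case equilibrium'' comparison is unambiguous on the $G$ side). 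Chaining the inequalities, player $i$'s payoff in any equilibrium of $G'$ is at least $v'_i \ge v_i$, which is his payoff in every equilibrium of $G$.

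I do not expect a serious obstacle: there is no real computation, and the only points requiring care are getting the direction of the best-response inequality right in the \emph{non}-constant-sum game $G'$, and keeping the maxmin over mixed strategies so that the minimax theorem genuinely applies to $G$. The one caveat worth stating is the usual existence issue: for finite action spaces an equilibrium of $G'$ exists by Nash's theorem, whereas for infinite action spaces one would need an appropriate compactness/continuity hypothesis to guarantee both the minimax value and an equilibrium of $G'$.
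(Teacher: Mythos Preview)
Your proof is correct. It is morally the same argument as the paper's, but you have packaged it more cleanly by routing explicitly through the maxmin value: you use (i) any equilibrium of $G'$ gives player $i$ at least $v'_i$, (ii) $v'_i \ge v_i$ from the pointwise improvement, and (iii) the minimax theorem to identify $v_i$ with player $i$'s equilibrium payoff in $G$. The paper instead fixes an equilibrium $(\sigma_1,\sigma_2)$ of $G$ and an equilibrium $(\sigma_1',\sigma_2')$ of $G'$ and argues the chain $u'_1(\sigma_1',\sigma_2') \ge u'_1(\sigma_1,\sigma_2') \ge u_1(\sigma_1,\sigma_2') \ge u_1(\sigma_1,\sigma_2)$, where the first inequality is best response in $G'$, the second is the pointwise hypothesis, and the third uses that in the constant-sum game $G$ the opponent's equilibrium strategy $\sigma_2$ minimizes $u_1(\sigma_1,\cdot)$. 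The paper also introduces a case split on whether $\sigma_2' = \sigma_2$, which is unnecessary since the second case subsumes the first. Your abstraction via $v_i$ avoids that artifact and makes transparent exactly where the constant-sum assumption enters (only in step (iii)); the paper's version has the minor advantage of being self-contained, not citing the minimax theorem by name.
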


\begin{proof}
Without loss of generality, assume that player $1$'s utility has improved. Now lets look at any equilibrium in G'. If player $2$ plays the same strategy he plays in the equilibrium in G, then if player $1$ plays the same strategy he plays in G, we know that his utility improved by definition. If he plays another strategy then by the definition of equilibrium he gets at least as much as with his strategy in G, otherwise he would want to switch. So in this case player $1$'s utility can't be lower.

If player $2$ plays a different strategy than in G, than we know that if player $1$ plays his strategy in G, he gets at least the same payoff. That is because we know that  G is constant-sum, so player $2$ minimizes player $1$ payoff with that action in G, so if he now changes action player $1$ could only improve. Using the same argument as in the previous case, we know that if player $1$ plays another action, then he must get more than in G.
\end{proof}

This shows that when starting from a constant-sum game, any change to the game that improves one player's utility for every action profile can't hurt him, no matter what changes are done to the other player. The next theorem shows that even games that are not exactly constant-sum but are close to them, have the same characteristics. 

The games we consider are games with utility of the form $u_i(\overrightarrow{a})=u^u_i(\overrightarrow{a})-c_i(a_i)$, where $u^u_i$ is the utility player $i$ gets if there was no cost involved, and we assume the sum of $u^u_i$ over all players is constant for any action profile. With every action $a$, player $i$ has an associated cost (or subsidy) $c_i(a)$, and the player does not gain from other players' costs. These games are the same as the games studied by Ben-Sason, Kalai and Kalai \cite{ben2007approach}. We show that if only one of the players has a cost for his actions (or the other player has a constant cost, which is just a constant-sum game with a different constant) then that player can not lose from changes that improve his utility.

\begin{theorem} 
Let G be a 2-player constant-sum game, with utility functions  $\overrightarrow{u^u}$. In a game $G^c$ in which the utility functions $\overrightarrow{u}$ are of the form $u_i(\overrightarrow{a})=u^u_i(\overrightarrow{a})-c_i(a_i)$, and for one of the players $c_i$ is constant, the other player can not lose in equilibrium from a local improvement of its cost function.
\end{theorem}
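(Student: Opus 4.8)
The plan is to reduce the claim to the previous theorem (the pure constant-sum case) by folding the cost of the ``active'' player into the payoff of the other player, which produces an auxiliary constant-sum game with the same equilibrium structure.

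Assume without loss of generality that player~$1$ is the one whose cost $c_1$ is constant. Since shifting a player's payoffs everywhere by a constant changes neither the set of Nash equilibria nor the equilibrium utilities, we may take $c_1\equiv 0$, so that in $G^c$ we have $u_1(\overrightarrow{a})=u^u_1(\overrightarrow{a})$ and $u_2(\overrightarrow{a})=u^u_2(\overrightarrow{a})-c_2(a_2)$. For any cost function $c_2$ of player~$2$, define an auxiliary game $\widetilde G(c_2)$ on the same action space by
\[
\widetilde u_1(\overrightarrow{a})=u^u_1(\overrightarrow{a})+c_2(a_2),\qquad
\widetilde u_2(\overrightarrow{a})=u^u_2(\overrightarrow{a})-c_2(a_2).
\]
Because $u^u_1+u^u_2$ is constant over action profiles, so is $\widetilde u_1+\widetilde u_2$; hence $\widetilde G(c_2)$ is a constant-sum game, to which the minimax theorem and the previous theorem apply.

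The key step --- the one I expect to need the most care --- is showing that $G^c$ and $\widetilde G(c_2)$ have exactly the same set of Nash equilibria, with player~$2$ getting the same utility in each. Player~$2$'s payoff is literally identical in the two games, so his best-response correspondence is unchanged. Player~$1$'s payoff differs only by the term $c_2(a_2)$, which depends solely on player~$2$'s action; thus, against any fixed (mixed) strategy of player~$2$ it adds a quantity independent of player~$1$'s own choice, leaving player~$1$'s best responses unchanged as well. Consequently the two games have the same equilibria, and since $\widetilde G(c_2)$ is constant-sum, all of its equilibria yield player~$2$ the same value $v(c_2)$, which is therefore also player~$2$'s (unique) equilibrium value in $G^c$.

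Finally, let $c_2'\le c_2$ pointwise be a local improvement of player~$2$'s cost. Comparing the constant-sum games $\widetilde G(c_2)$ and $\widetilde G(c_2')$, player~$2$'s payoff satisfies $\widetilde u_2'(\overrightarrow{a})=u^u_2(\overrightarrow{a})-c_2'(a_2)\ge u^u_2(\overrightarrow{a})-c_2(a_2)=\widetilde u_2(\overrightarrow{a})$ for every profile, i.e.\ it improves everywhere, while player~$1$'s payoff changes (in a specific way, but that is irrelevant). Applying the previous theorem to the constant-sum game $\widetilde G(c_2)$ with player~$2$ as the player whose utility improves, player~$2$'s utility in every equilibrium of $\widetilde G(c_2')$ is at least $v(c_2)$. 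Passing back through the equilibrium correspondence, player~$2$'s utility in every equilibrium of $G^{c'}$ is at least $v(c_2)$, his equilibrium utility in $G^c$. Hence player~$2$ cannot lose in equilibrium from the local improvement, as claimed.
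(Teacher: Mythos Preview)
Your proof is correct and follows essentially the same route as the paper: you build the same auxiliary constant-sum game (the paper calls it $H$, defining $u^h_i=u_i+c_{-i}(a_{-i})$), observe that it has the same equilibria as $G^c$ because adding a function of the opponent's action does not alter best responses, and then invoke the previous theorem on the auxiliary game. The only cosmetic differences are that the paper keeps the constant cost $c$ explicit (so the improving player's utility in $H$ and $G^c$ differ by $c$) and phrases the comparison in terms of the worst equilibrium, whereas you absorb the constant and track the improving player's utility directly, noting it is literally identical in $G^c$ and $\widetilde G(c_2)$; this makes your write-up slightly cleaner but is the same argument.
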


\begin{proof}
First assume without loss of generality that player 2's cost function is constant with a value of c, and player 1's cost improved. We use the same idea as Ben-Sason, Kalai and Kalai \cite{ben2007approach}.  Given a game $G^c$, we build a game $H$, that differs from $G^c$ only in the utilities: we define $u^h_i(\overrightarrow{a})= u_i(\overrightarrow{a})+c_{-i}(a_{-i})= u^u_i(\overrightarrow{a})-c_i(a_i)+c_{-i}(a_{-i})$ for $i=1,2$. It easy to verify that any advantage a player can get from switching strategies in $H$, he can get from switching strategies in $G^c$. This means that any equilibrium in $H$ is an equilibrium in $G^c$, and vice versa. Let $\sigma_1,\sigma_2$ be the strategies at an equilibrium that is worst  for player $1$ in $H$, and let $P_{\sigma_i}(a)$ be the probability of playing action $a$ when using strategy $\sigma_i$. Then:\newline
\begin{align*} 
u_1(\sigma_1,\sigma_2) &=\sum_{a,b}p_{\sigma_1}(a)p_{\sigma_2}(b)u_1(a,b)\\
u^h_1(\sigma_1,\sigma_2) &=\sum_{a,b}p_{\sigma_1}(a)p_{\sigma_2}(b)u^h_1(a,b)\\
&=\sum_{a,b}p_{\sigma_1}(a)p_{\sigma_2}(b)u_1(a,b) +\sum_{a,b}p_{\sigma_1}(a)p_{\sigma_2}(b)c_2(b)\\
&=\sum_{a,b}p_{\sigma_1}(a)p_{\sigma_2}(b)u_1(a,b) +c  \text{ ($c_2$ is constant)}\\
&=u_1(\sigma_1,\sigma_2)+c
\end{align*}
This shows that player 1's utility in any equilibrium in $H$ is his utility in the same equilibrium in $G^c$ plus some constant. Moreover when comparing any two equilibria in $H$, the difference in the utility of player $1$ in them is exactly the difference in his utility in $G^c$. This means that $\sigma_1,\sigma_2$ is also the worst equilibrium for player $1$ in $G^c$.

$H$ is a game of the type described in Theorem $1$. So, by that theorem, if we change player $1$'s cost function $c_1$ to a cost function $c_1'$, where he pays no more than with $c_1$ for his actions, player $1$ gets at least the same utility in the worst case. We call the games created by changing $c_1$ to $c_1'$ $H'$ and $G^{c'}$. By the same argument as before, the new worst-case equilibrium for player $1$ in $H'$ is also the worst-case equilibrium for player $1$ in $G^{c'}$, and the difference between the utility of player $1$ in the worst-case equilibria for player $1$ of $H$ and $H'$ is exactly equal to the differences between player $1$'s utility  in the worst-case equilibria for player $1$ of $G^c$ and $G^{c'}$. This means that in $G^{c'}$ he is also at least as well off as he was in $G^c$.
\end{proof}

The next example shows that if both players have non-constant costs this property fails to hold. Consider the following game:

\begin{center}
\begin{tabular}{|c|c|c|}\hline $u^u$ & $a_2$ & $b_2$  \\\hline $a_1$ & 6,0  & 2,4 \\\hline $b_1$ & 4,2 & 1,5 \\\hline \end{tabular}
\end{center}

where the costs are $0$ for all actions for all players. The game described has only one equilibrium: $(a_1,b_2)$ with utility profile $(2,4)$. Now consider the situation where $c_1(a_1)=-1.5$, $ c_1(b_1)=0$, $c_2(a_2)=0$, $c_2(b_2)=-3.5$. This game has only one equilibrium: both players play each action with probability 0.5. The expected utility is (2.5,1). By changing only $c_1$ to $c_1'$, which is 0 for both actions, which obviously locally improve the players utility, the equilibrium is changed back to $(a_1,b_2)$ with utilities $(2,0.5)$, and thus the player is globally worse than before.

This section showed how in constant-sum games, a player can always gain from a local improvement in his utility, and that it is also true in games which are close to constant sum games (a change to only one player makes them constant-sum again). There are of course other changes that can not harm a player, even in games which are not constant sum. For example, improving both actions in the same amount (which is just like giving free money no matter what the player does), or improving the cost of an already dominant strategy by more than that of other strategies. \newline

\section {I would rather stay stupid}

The phenomenon of free riding is well observed and studied. One flavor of it occurs when a part of a group gets credit for the work of others without contributing anything. 
We argue that this can be explained by the weak player's advantage we described, and moreover that it also explains why there is a negative incentive for weak players to improve.
We illustrate this by an example that shows how a weak player gets credit without any contribution, and why the rest of the group might agree to it.

Consider a scenario where two students are assigned to work on an assignment together, but they can't meet and exchange work before its deadline. The assignment has $10$ questions. In order to solve each question, you must first have the answer to the previous question. Both students gain one point for every question any one of them solves; if they both solve the same question they still get only $1$ each. This means that their utility can be written as $u=max(x_1,x_2)$  where $x_i$ is the number of questions student $i$ solves. The first student has a complexity cost of $0.1$ for every question he solves, while the second student has a complexity cost of $0.1$ for the first $7$ question, and a cost of $1.1$ for the rest. This game has only one equilibrium: the first student does all $10$ questions, while the second student free rides (does nothing) and gets the credit. This is because for every question the second student will solve after the 7th he will get $-0.1$ utility, so he will not do more then $7$ questions, while the first student prefers solving 10 question by himself to doing nothing and having the second student solve only $7$ questions. The utility for the second student in this game is $10$.

Now what would have happened if the second student were ``smarter", and had the same complexity cost as the first student? In this situation, the game would have one more pure strategy equilibrium (it also has a mixed strategy equilibrium), in which the second student solves all $10$ questions and the first student does nothing. In this equilibrium, the second student's utility is $9$, which is lower than before. This shows that the second student has no incentive to try to get ``smarter". By free riding, he ensures himself the highest utility possible in this game, while the other student has no choice but to do all the work. This happens because the second student has a credible threat: no matter what the first student does, he won't solve more than $7$ questions.
This shows that the second student has no incentive to improve. If he gets smarter he actually loses the threat and gets a lower payoff.

\section {I am better than I seem}

In this section, we explore why people sometimes pretend to be weak, when they are really not, which is also a well observed phenomenon. 
Intuitively, we show that the reason for this is that acting weak, lowers the expectation from the players, and allows them to invest less effort. 
To do that we do not look at the added value a player gets from better utility, but instead look at a scenario where a player would rather behave as if he had lower utility, since in equilibrium it gets him a higher payoff. 

We use the spirit of Spence's \cite{spence1973job} signaling model, which is traditionally used to show how players signal how good they are to the other players, to instead show that players might sometimes want to do the opposite, and signal that they are even worse then they really are. Spence shows how education can be seen  as a signal in the hiring market, which helps employers decides the wages to offer job candidates. He defines an information-feedback cycle, consisting of the employer's beliefs, the offered wages as a function of the signals, the signal chosen by applicants, and the final observation of the hired employees by the employers (which feeds back into their beliefs). He defines an equilibrium in this model as a situation where the beliefs of the employer are self confirming, that is, do not change as a result of the final observation of the employers that were hired based on the previous beliefs. 

We use a variant of this model to show that people might even try to seem stupider than they really are (or more generally signal that their utility\textbackslash complexity functions are weaker) to get the power of a credible threat. Consider an educational institution that wants to divide its students into two classes, regular and honors. For every student that is placed in the honors class and is able to pass it, the school gets a utility of $1$, but if the student is placed in the honors class and fails, the school get a utility of $-1$. For every student that is placed in the regular class, the school gets a utility of $0$. A student has two options: he can either relax and easily pass the regular class but fail the honors class, or he can work hard, in which case he passes both classes. If he passes a class he gets utility of 100, and if he fails he gets utility of 0.

There are three types of students. A slow student, who has a cost of $100$ for working hard, a moderate student, who has a cost of $7$, and a fast student, who has a cost of $3$. The cost of relaxing is $0$ for all students. The school would like to place the slow students in the regular class and the moderate and fast students in the honors class, but it cannot tell what student fall into each category. 

To help it with the process, the school decides to do a preliminary placement test for students. The test has $10$ questions, and the school decides that a student who solves $7$ or more questions will be placed in the honors class. To motivate the students to perform well, the school offers them the option of skipping one hour of class without being punished for every question they solve. Skipping an hour has a utility of $1$. The three types of students have different costs for this test. All students have a cost of $0$ for the first six questions. For every question after that, the slow student has a cost of $1.1$, the moderate student has a cost of $0.5$, and the fast student has a cost of $0.2$. The school is unaware of these exact costs (as in the Spence model, where the employer is unaware of the exact education costs of the different groups), but designs the test knowing that both the fast and moderate students have a positive incentive to answer all the questions, while the slow student will not answer more than six. 

It is easy to see that in order to maximize their utility, the slow and moderate students will answer only six questions (giving them utility 106), and the fast students will answer all ten questions (giving them utility 106.2). Doing badly in the exam acts as a signal of being slower (and is negatively correlated with the utility, as required by the Spence model). Since the only way for the school to figure out if it did the right thing is to see if someone failed the honors class (this is slightly different from Spence's original model, where the employer gets complete feedback), and no students in the honors class will fail, the school's beliefs are self-confirming, so this gives an equilibrium. 

As in our previous example, the slow students have no motivation to become moderate, thus changing from the cost of the slow student to that of the moderate student has value of of $0$. The value for both the moderate and the slow student of getting the complexity of the fast player is positive in this example. 

This example shows that people will sometimes prefer to be considered less smart then they really are, in order to take advantage of the threat of having higher complexity. 

\section {Conclusion}

This paper gives a negative answer to our motivating question - it shows that in some scenarios players actually prefer to be bounded. 
They can actually gain an advantage from having higher complexity costs.
More generally, the paper shows that locally improving a player's utility for every action profile, can be bad for him in equilibrium.

We then use this counter-intuitive fact, that might seem as just a minor flaw in the solution concept, to actually explain familiar human behaviors that may seem at first to be irrational.
It will be interesting to find other observed phenomena that might be explained by these ideas.

The paper also characterize games for which this can't happen. We show that strict competition actually prevent such behaviors.
This suggests that the cooperative nature of some tasks might cause people to invest less, act as if they are weak and free ride the other participants in a task.
This can be used by mechanism designers and multi agent systems designers when deciding on the incentives structure they want to use, and on how to encourage corporation in their environment.


\bibliographystyle{abbrv}
\bibliography{bibliography}  

\end{document}